\newtheorem{example}{Example}
\newtheorem{theorem}{Theorem}
\newtheorem{definition}{Definition}
\newcommand{\EE}[1]{{\rm{E}}\left[#1\right]}
\title{On the normalized signal to noise ratio in covariance estimation}
\author{Tzvi Diskin and Ami Wiesel}
\begin{document}
\maketitle

\begin{abstract}
    We address the Normalized Signal to Noise Ratio (NSNR) metric defined in the seminal paper by Reed, Mallett and Brennan on adaptive detection. The setting is detection of a target vector in additive correlated noise. NSNR is the ratio between the SNR of a linear detector which uses an estimated noise covariance and the SNR of clairvoyant detector based on the exact unknown covariance. It is not obvious how to evaluate NSNR since it is a function of the target vector. To close this gap, we consider the NSNR associated with the worst target. Using the Kantorovich Inequality, we provide a closed form solution for the worst case NSNR. Then, we prove that the classical Gaussian Kullback Leibler (KL) divergence bounds it. Numerical experiments with different true covariances and various estimates also suggest that the KL metric is more correlated with the NSNR metric than competing norm based metrics.
\end{abstract}

\section{Introduction}

Covariance estimation is a fundamental problem in statistical signal processing \cite{kay1998fundamentals,wiesel2015structured,li1999computationally,aubry2017geometric,mestre2008asymptotic,smith2005covariance}. A large body of works considers different algorithms in various structures, distributions and settings. This letter addresses the natural question: Given an estimate $\hat C$ of a true covariance $C$, what is the best way to measure its accuracy?

There are numerous competing distance functions between $C$ and $\hat C$. These include algebraic distances as the Frobenius norm or the spectral norm, as well as statistical divergences as the Kullback Leibler (KL) divergence \cite{kullback1951information}. The metrics differ in their statistical properties and computational complexity. Choosing between them is challenging because covariance estimation is typically an intermediate step within the signal processing pipeline. The estimate is less interesting on its own and is mainly used as an input to a downstream task as target detection, beamforming or Portfolio design \cite{zhang2013finite}. Therefore, accuracy should be measured with respect to the end goal rather than the covariance itself. For example, the end goals in the detection are the error probabilities. One can simulate the complete end to end pipeline and define accuracy using these probabilities. But this approach is not modular, depends on the specific detection scenario, has no closed form solution and is computationally expensive. Therefore, it is more common to measure the distance using the simpler surrogate metrics. 

The choice of metric regains attention over the last years due to increasing use of machine learning. Originally, metrics were used for performance analysis and comparison of existing algorithms \cite{bickel,eldar2020sample}. In machine learning, metrics are also used as loss functions and are an important decision in the design and training of neural networks. In the context of covariance estimation, there are many recent works on covariance prediction \cite{barratt2023covariance,barthelme2021doa,addabbo2022nn,kang2023clutter,diskin2024self}. The most common loss functions are the Frobenius norm and KL divergence. It is unclear how these loss functions correlate with the end to end detection performance.

We focus on the use of covariance estimation for detecting a target in additive correlated noise using an adaptive matched filter \cite{reed1974rapid,kelly1986adaptive,krim1996two,kay1998fundamentals,manolakis2013detection}. Signal to Noise Ratio (SNR) is the key performance measure in this application as it controls the probabilities of error. The SNR decreases when an inexact covariance is used. Therefore, in their seminal paper, Reed, Mallett and Brennan (RMB) proposed to measure performance using the Normalized Signal to Noise Ratio (NSNR)  \cite{reed1974rapid}:
\begin{eqnarray}\label{nsnr}
    &{\rm{NSNR}}& = \frac{{\rm{SNR}}_{{\rm{estimated\;}}\hat C}}{{\rm{SNR}}_{{\rm{true\;}}C}}
\end{eqnarray}  
Intuitively, as $\hat C \rightarrow C$ we get NSNR $\rightarrow 1$ and NSNR is indicative of the distance between $C$ and $\hat C$. RMB proved the following rule of thumb for a target of length $D$ \cite{reed1974rapid}: ``If one wishes to maintain an average loss ratio of better
than one-half then at least $2D$ samples of data are needed''. This rule is based on ideal probabilistic assumptions and holds only for the sample covariance estimate. 
If the true signal direction is not known exactly then more samples are required to ensure near optimal performance \cite{boroson1980sample}. Numerical analysis of NSNR using regularized sample estimates was presented in \cite{carlson1988covariance}. A main difficulty in SNR based measures is that these depend on the target vector and are only applicable if it is a priori known. Some works bypass this by assuming that the targets are random \cite{mestre2005finite,besson2005performance}. To our knowledge, an analysis of the NSNR with respect to arbitrary targets in general settings and more sophisticated estimation methods is still missing.

The main contribution of this work is the introduction and analysis of the worst case NSNR distance with respect to the target vector:
\begin{align}
    d_{\rm{NSNR}}(C,\hat C) = \max_{\rm{target\;vector}} -\frac{1}{2}\log\;{\rm{NSNR}}
\end{align}
By changing variables and applying the Kantorovich inequality, we provide a closed form solution to the distance. It is non-differentiable and involves eigenvalues. But we prove that it is upper bounded by the popular KL divergence. Therefore, we advocate for using KL when analyzing or optimizing the NSNR.

We conclude the paper with simple experiments that numerically investigate the correlation between different metrics and the NSNR distance. The results clearly support the claim that KL based metrics are more correlated with NSNR than the norm based metrics. This is consistent across various settings with different number of samples and different regularization parameters of the sample covariance. Interestingly, previous work used the Frobenius norm to choose the best regularization  \cite{ledoit2004well,chen2010shrinkage,chen2012shrinkage,ollila2022regularized}. Future work on covariance estimation for target detection may consider tuning the parameter using KL which is apparantly more correlated with the end performance.

\section{Normalized Signal to Noise Ratio}
We begin by recalling the definition of NSNR from \cite{reed1974rapid}. We consider the detection of a known signal in additive noise \cite{reed1974rapid,kelly1986adaptive,krim1996two,kay1998fundamentals,manolakis2013detection}. The observed vector is 
\begin{align}
    x = As+n
\end{align}
where $s$ is a length $D$ signal vector and $n$ is a zero mean noise vector with a positive definite covariance matrix $C\succ 0$. The goal is to decide between
\begin{align}
    & H_0:\quad A=0\nonumber\\
    & H_1:\quad A>0
\end{align}
The classical test for this setting is 
\begin{align}
    w^Tx \lessgtr \gamma
\end{align} 
where $\gamma$ is a threshold parameter and $w$ is a (decorrelated) matched filter
\begin{align}
    w = C^{-1}s.
\end{align}
Assuming Gaussian noise, this detector is also the Likelihood Ratio Test (LRT). 

The test is linear and its performance is  characterized by the resulting SNR:
\begin{eqnarray}
    &{\rm{SNR}}_{{\rm{true\;}}C}& = \frac{(w^Ts)^2}{E[(w^Tn)^2]} \nonumber\\
    &&= \frac{(s^TC^{-1}s)^2}{s^TC^{-1}E[nn^T]C^{-1}s}\nonumber\\
    &&=\frac{(s^TC^{-1}s)^2}{s^TC^{-1}CC^{-1}s}\nonumber\\
    &&=s^TC^{-1}s
\end{eqnarray}
In practice, the true covariance matrix of the noise $C$ is rarely known. It is common to replace it with an estimate $\hat C\succ 0$ and use 
\begin{align}
    \hat w = \hat C^{-1}s
\end{align}
This leads to a mismatched and lower SNR
\begin{align}
    &{\rm{SNR}}_{{\rm{estimated\;}}\hat C} = \frac{(s^T\hat C ^{-1}s)^2}{s^T\hat C^{-1}C\hat C^{-1}s}
\end{align}
To analyze the loss in performance due to the inaccurate covariance, RMB proposed the NSNR \cite{reed1974rapid}:
\begin{definition}
Let $C$ and $\hat C$ be two positive definite matrices. The NSNR in (\ref{nsnr}) is given by
\begin{eqnarray}
    &{\rm{NSNR}}&=\frac{(s^T\hat C ^{-1}s)^2}{(s^T\hat C^{-1}C\hat C^{-1}s)(s^TC^{-1}s)}.
\end{eqnarray}    
\end{definition}

It is easy to see that
\begin{align}
    0 \leq {\rm{NSNR}}(x) \leq 1
\end{align}
and the goal of different covariance estimation methods is to try and achieve the upper bound.

Note that the NSNR depends on $s$ and the degradation in NSNR in different settings is still poorly understood. The goal of this letter is to shed more light on this important metric.

\section{Worst case NSNR}
In this section, we eliminate the dependence of NSNR on the target vector by considering the worst case scenario. It turns out that the analysis depends $C$ and $\hat C$ via their matrix ratio $Q$ which is the a key component in many covariance estimation metrics (e.g., 
\cite{tiomoko2019random}).
\begin{definition}
   Let $C$ and $\hat C$ be two positive definite matrices. Their matrix ratio $Q$ is defined as
\begin{align}
    Q = \hat C^{-\frac{1}{2}}C \hat C^{-\frac{1}{2}}.
\end{align} 
\end{definition}
Using the ratio, the next result provides a closed form solution to the worst case NSNR which is a function only of its condition number $\kappa(Q)$.
\begin{theorem}
Let $C$ and $\hat C$ be two positive definite matrices with a ratio $Q$. Then, the worst case NSNR is given by

\begin{eqnarray}\label{nsnrmin}
  &{\rm{NSNR}}_{\min}(C,\hat C)  &= \min_s \frac{(s^T\hat C ^{-1}s)^2}{(s^T\hat C^{-1}C\hat C^{-1}s)(s^TC^{-1}s)}\nonumber\\
  && = \frac {4\kappa(Q)}{\left(\kappa(Q)+1\right)^2}
\end{eqnarray}

It is symmetric
    \begin{align}
        {\rm{NSNR}}_{\min}(C,\hat C) = {\rm{NSNR}}_{\min}(\hat C, C)
    \end{align}
and bounded from above 
    \begin{align}
        {\rm{NSNR}}_{\min}(C,\hat C)\leq 1
    \end{align}
with equality if and only if $C=\alpha \hat C$ for some $\alpha>0$. 
\end{theorem}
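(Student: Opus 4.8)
The plan is to collapse the minimization over the target into a single application of a classical matrix inequality by whitening with $\hat C$. First I would introduce the change of variables $s = \hat C^{1/2} y$, equivalently $y = \hat C^{-1/2} s$, which is a bijection since $\hat C \succ 0$; hence minimizing over $s$ is identical to minimizing over $y$. Substituting and using the identities $\hat C^{1/2} \hat C^{-1} \hat C^{1/2} = I$, $\hat C^{-1/2} C \hat C^{-1/2} = Q$ and $\hat C^{1/2} C^{-1} \hat C^{1/2} = Q^{-1}$, the three quadratic forms become $s^T \hat C^{-1} s = y^T y$, $s^T \hat C^{-1} C \hat C^{-1} s = y^T Q y$ and $s^T C^{-1} s = y^T Q^{-1} y$. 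The objective therefore reduces to
\begin{align}
  \mathrm{NSNR} = \frac{(y^T y)^2}{(y^T Q y)(y^T Q^{-1} y)},
\end{align}
which no longer depends on $C$ and $\hat C$ separately but only on their ratio $Q$.

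Next I would invoke the Kantorovich inequality: for any positive definite $Q$ and any vector $y$,
\begin{align}
  (y^T Q y)(y^T Q^{-1} y) \leq \frac{\left(\lambda_{\max}(Q)+\lambda_{\min}(Q)\right)^2}{4\,\lambda_{\max}(Q)\,\lambda_{\min}(Q)}\,(y^T y)^2,
\end{align}
with equality attained for an appropriate combination of the eigenvectors of $Q$ associated with its extreme eigenvalues. Dividing through and rewriting the bound in terms of $\kappa(Q) = \lambda_{\max}(Q)/\lambda_{\min}(Q)$ gives exactly $\mathrm{NSNR} \geq 4\kappa(Q)/(\kappa(Q)+1)^2$, and the tightness of Kantorovich guarantees that this lower bound is the actual minimum, establishing (\ref{nsnrmin}).

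For the two remaining claims I would argue directly from the closed form. Symmetry follows because swapping $C$ and $\hat C$ replaces $Q$ by a positive definite matrix whose eigenvalues are the reciprocals of those of $Q$; since the condition number is invariant under inversion, $\kappa$ is unchanged and the formula returns the same value. For the upper bound I would treat $f(\kappa) = 4\kappa/(\kappa+1)^2$ as a scalar function on $\kappa \geq 1$: the identity $(\kappa+1)^2 - 4\kappa = (\kappa-1)^2 \geq 0$ yields $f(\kappa) \leq 1$, with equality iff $\kappa(Q)=1$. Finally, $\kappa(Q)=1$ means all eigenvalues of $Q$ coincide, i.e. $Q = \alpha I$, which unwinds to $C = \alpha \hat C$, giving the stated equality condition.

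The one genuinely nontrivial step is recognizing that the whitening substitution turns the NSNR into precisely the Kantorovich ratio $(y^T y)^2/\left((y^T Q y)(y^T Q^{-1} y)\right)$; once the objective is in this form, everything else is either a direct appeal to a known inequality or an elementary scalar computation. I would also verify that the Kantorovich equality case is feasible in the $y$ variable, so that the infimum is genuinely attained and the $\min$ in (\ref{nsnrmin}) is justified rather than being a mere infimum.
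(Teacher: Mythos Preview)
Your proposal is correct and follows essentially the same route as the paper: the same whitening substitution $y=\hat C^{-1/2}s$ reduces the objective to the Kantorovich ratio in $Q$, after which the closed form, the equality case, symmetry via invariance of $\kappa$ under inversion, and the bound via $(\kappa-1)^2\ge 0$ (equivalently AM--GM) all match the paper's argument.
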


\begin{proof}    
To simplify the optimization, we change variables
\begin{align}
    y=\hat C^{-\frac{1}{2}}s
\end{align}
and obtain
\begin{eqnarray}
  &{\rm{NSNR}}_{\min}(C,\hat C)  &= \min_y \frac{(y^Ty)^2}{(y^T\hat C^{-\frac{1}{2}}C\hat C^{-\frac{1}{2}}y)(y^T\hat C^{\frac{1}{2}}C^{-1}\hat C^{\frac{1}{2}}y)}\nonumber\\
  && = \min_y \frac{(y^Ty)^2}{(y^T\hat Qy)(y^TQ^{-1}y)}
\end{eqnarray} 
The NSNR is invariant to scaling of $y$ by a positive scalar and we can restrict the attention to the case of $y^Ty=1$:
\begin{align}\label{kattarovich optimization}
    {\rm{NSNR}}_{\min}^{-1} = 
    \left\{\begin{array}{ll}
        \max_y & (y^TQy)(y^TQ^{-1}y) \\
        {\rm{s.t.}} & y^Ty=1
    \end{array}\right.
\end{align}

The solution of the to the optimization problem is known as the Kantrovitch inequlity which states that \cite{liu1999survey}: 
\begin{equation}
    (y^TQy)(y^TQ^{-1}y) \leq \frac {\left(q_{\min}+q_{\max}\right)^2}{4q_{\min}q_{\max}}
\end{equation}
where $q_{\min}$ and $q_{\max}$ are the minimal and maximal eigenvalues of $Q$
A sufficient condition for equality is when:
\begin{equation}
    s = \frac{1}{\sqrt{2}}(u_{\min} + u_{\max}),
\end{equation}
where $u_{\min}$ and $u_{\max}$ are the eigenvectors that correspond to $q_{\min}$ and $q_{\max}$.
Finally, dividing the numerator and denominator by $q_{\min}$ gives \ref{nsnrmin}.

The upper bound ${\rm{NSNR}}_{\min}(C,\hat C)\leq 1$ is due to the arithmetic mean and geometric mean inequality.
Finally, the symmetry of the worst case NSNR is straight forward by noting that the eigenvalues of $\hat C^{-\frac 1 2}C\hat C^{-\frac 1 2}$ and $C^{\frac 1 2}\hat C^{-1}C^{\frac 1 2}$ are identical. 
\end{proof}

Given its properties, it is natural to consider the NSNR$_{\min}$ as a distance function between $C$ and $\hat C$. High NSNR indicates that $C$ and $\hat C$ are close and therefore the metric should be an inversely proportional to the NSNR. Following the convention of measuring power ratios in decibels (dB), we also take a logarithm and define the NSNR distance as
\begin{eqnarray}
    d_{\rm{NSNR}}(C,\hat C) = -\frac{1}{2}\log{\rm{NSNR}}_{\min}(C,\hat C).
\end{eqnarray}
As expected, $d_{\rm{NSNR}}\geq 0$. It is equal to zero if and only if $C=\alpha\hat C$ for some $\alpha>0$. This is not common to distance functions but is natural to the detection formulation which is invariant to scaling of the covariance. Unfortunately, it is easy to find examples that show that it does not satisfy the triangle inequality.


\section{Relation to other distances}
In this section we compare the NSNR distance with other metrics. A popular distance is the Gaussian KL divergence \cite{kullback1951information}:
\begin{eqnarray}
   & d_{\rm{KL}}(C,\hat C) &= \frac 12 {\rm{Tr}}\left(C\hat C^{-1}\right)-\frac D2-\frac 12\log \left|C\hat C^{-1}\right|\nonumber\\
   &&= \frac 12{\rm{Tr}}\left(Q\right)-\frac D2-\frac 12\log \left|Q\right|
\end{eqnarray}
where we have used the cyclic properties of the trace and determinant operators. The next theorem shows that KL is in fact a sharp upper bound to the NSNR metric.
\begin{theorem}
    Let $C$ and $\hat C$ be two positive definite matrices. Then
    \begin{equation}\label{nsnr_kl}
     d_{\rm{NSNR}}(C,\hat C)\leq d_{\rm{KL}}(C,\hat C).
    \end{equation}
\end{theorem}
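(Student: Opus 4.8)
The plan is to reduce both sides to the eigenvalues $q_1,\dots,q_D>0$ of the matrix ratio $Q$, and then to a single scalar inequality. First I would rewrite each distance spectrally. Using the closed form ${\rm{NSNR}}_{\min}=4\kappa(Q)/(\kappa(Q)+1)^2$ from the previous theorem together with $\kappa(Q)=q_{\max}/q_{\min}$, the left-hand side becomes
\begin{align}
d_{\rm{NSNR}}(C,\hat C) = -\frac12\log\frac{4\kappa(Q)}{(\kappa(Q)+1)^2} = \log\frac{q_{\min}+q_{\max}}{2\sqrt{q_{\min}q_{\max}}},
\end{align}
i.e. the logarithm of the arithmetic-to-geometric-mean ratio of the two extreme eigenvalues. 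Diagonalizing $Q$ in the KL expression gives
\begin{align}
d_{\rm{KL}}(C,\hat C) = \frac12\sum_{i=1}^{D}\left(q_i-1-\log q_i\right).
\end{align}

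The key observation is that every summand satisfies $q_i-1-\log q_i\ge 0$, by the elementary fact $\log x\le x-1$. I would therefore discard all but the two terms indexed by the smallest and largest eigenvalues, which can only decrease the right-hand side:
\begin{align}
d_{\rm{KL}}(C,\hat C)\ge \frac12\big[(q_{\min}-1-\log q_{\min})+(q_{\max}-1-\log q_{\max})\big].
\end{align}
(When $D=1$ or all eigenvalues coincide we have $\kappa(Q)=1$ and the claim is trivial, so I may assume $q_{\min}$ and $q_{\max}$ sit at distinct indices.) It then remains to compare this lower bound with the spectral form of $d_{\rm{NSNR}}$, which is a two-variable inequality.

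The punchline, and the step I expect to be the cleanest rather than the hardest, is that the logarithmic pieces cancel exactly. Writing $a=q_{\min}$, $b=q_{\max}$, the target inequality
\begin{align}
\log\frac{a+b}{2\sqrt{ab}}\le \frac12\big[(a-1-\log a)+(b-1-\log b)\big]
\end{align}
loses its $-\tfrac12\log a-\tfrac12\log b$ terms on both sides and collapses to $\log\frac{a+b}{2}\le \frac{a+b}{2}-1$; setting $u=(a+b)/2$ this is precisely $\log u\le u-1$ for $u>0$. The main conceptual point is thus recognizing that, despite the appearance of $D$ eigenvalues and a nonsmooth condition number, the problem reduces to dropping the intermediate eigenvalues and invoking $\log u\le u-1$. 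This reduction also exhibits sharpness: equality forces each intermediate $q_i=1$ together with $q_{\min}+q_{\max}=2$, realized for instance by a $Q$ with spectrum $\{1-\epsilon,1+\epsilon\}$, so the KL bound is attained.
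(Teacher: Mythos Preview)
Your proof is correct and follows essentially the same route as the paper: both reduce to the eigenvalues of $Q$, use $q-1-\log q\ge 0$ to handle the intermediate eigenvalues, and settle the extremal two-eigenvalue inequality via $\log u\le u-1$ (the paper writes this equivalently as $\log(q^2)\le q+\log 4-2$ applied at $q=q_{\min}+q_{\max}$). Your presentation is slightly cleaner in that it makes the cancellation of the $\log$ terms explicit and recognizes the residual inequality as the standard $\log u\le u-1$ directly, but the argument and the equality conditions are the same.
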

\begin{proof}
\begin{align}
2d_{\rm{NSNR}} &=  \log{\rm{NSNR}}_{\min}^{-1} \nonumber\\
&= \log\frac{\left(\kappa(Q)+1\right)^2}{4\kappa(Q)}\nonumber \\
& =\log\frac {\left(q_{\min}+q_{\max}\right)^2
   }{4q_{\min}q_{\max}}\nonumber\\
& =\log\left(q_{\min}+q_{\max}\right)^2-\log \left(q_{\min}q_{\max}\right)-\log(4)\nonumber\\
& \leq q_{\min}+q_{\max}+\log(4)-2-\log \left(q_{\min}q_{\max}\right)-\log(4)\nonumber\\& = q_{\min}+q_{\max}-2-\log \left(q_{\min}q_{\max}\right)\nonumber\\
& \leq \sum_iq_i-D-\log \prod_iq_i\nonumber\\
& = 2d_{\rm{KL}}
\end{align}
where we have used
    \begin{align}
    \log(q^2)\leq q+\log(4)-2 \quad \forall\quad q>0
\end{align}
in the first inequality which holds with equality if and only if $q_{\min}+q_{\max}=2$. The second inequality is based on
\begin{align}
    q-1-\log(q)\geq 0  \quad \forall\quad q>0
\end{align}
and holds with inequality if and only if $q_i=1$ for all $i$ except $q_{\min}$ and $q_{\max}$.
\end{proof}
The relation in (\ref{nsnr_kl}) can also be explained by physical considerations. The KL divergence measures the loss of information in a log scale, while the SNR depends quadratically on the amplitude of the signal that carries the information.

Many works use other distances based on norms of the difference between the matrices. These include the Frobenius norm and spectral \cite{bickel,eldar2020sample} which cannot be expressed as functions of $Q$ and $Q^{-1}$.
\begin{eqnarray}
    & d_{\rm{Frobenius}}(C,\hat C) &= \|C-\hat C\|_{\rm{Frobenius}}\nonumber\\
    & d_{\rm{Spectral}}(C,\hat C) &= \|C-\hat C\|_2.
\end{eqnarray}
In contrast to Theorem 2, the next counter example demonstrated that small values under these norms do not necessarily lead to good NSNR.
 To avoid the trivial fact that the norm based matrices are sensitive to scaling, the example is based on matrices with near unit trace.
 \begin{example} 
Consider the following matrices:
\begin{equation}
C=\left(\begin{array}{ccc}
1 & 0 & 0\\
0 & \alpha & 0\\
0 & 0 & \alpha^2
\end{array}\right),\hat C=\left(\begin{array}{ccc}
1 & 0 & 0\\
0 & \alpha^2 & 0\\
0 & 0 & \alpha
\end{array}\right)
\end{equation}
where $0<\alpha \ll 1$.
The Frobenius and the spectral norm are given by:
\begin{align}
    d_{\rm{Frobenius}}(C,\hat C) = \sqrt 2(\alpha - \alpha^2), \qquad
    d_{\rm{Spectral}}(C,\hat C) = \left(\alpha - \alpha^2\right)
\end{align}
and are very small. On the other hand,
\begin{equation}
     d_{\rm{NSNR}}(C,\hat C) = \log \frac{1+\alpha^2}{2\alpha} 
\end{equation}
 which can be arbitrarily large as $\alpha$ approaches zero. Note that even in relatively small condition number of 10, $d_{\rm NSNR}\approx 1.6$  which means more than 95\% drop in the SNR for worst case target. 
\end{example}




\section{Numerical experiments}

\begin{table}[]
    \centering
\begin{tabular}{|l|cccc|}
\hline
& N=50 & N=100 & N = 150 & N=200 \\
\hline
Frobenius & 0.6 & 0.72 & 0.74 & 0.77 \\
Spectral Norm & 0.49 & 0.59 & 0.61 & 0.66 \\
KL & 0.81 & 0.79 & 0.84 & 0.82 \\
symKL & 0.85 & 0.83 & 0.86 & 0.85 \\
\hline
\end{tabular}
    \caption{Pearson correlation of different metrics with the worst case NSNR. The true covariance is the identity.}
    \label{tab:true_cov_is_eye_pearson}
\end{table}


\begin{table}[]
    \centering
\begin{tabular}{|l|cccc|}
\hline
& N=50 & N=100 & N = 150 & N=200 \\
\hline
Frobenius & 0.1 & 0.08 & 0.12 & 0.17 \\
Spectral Norm & 0.09 & 0.07 & 0.11 & 0.17 \\
KL & 0.82 & 0.84 & 0.81 & 0.83 \\
symKL & 0.85 & 0.87 & 0.84 & 0.85 \\
\hline
\end{tabular}
    \caption{Pearson correlation of different metrics with the worst case NSNR. The true covariance is approximately low rank.}
    \label{tab:true_cov_is_lowrankpearson}
\end{table}

\begin{table}[]
    \centering
\begin{tabular}{|l|ccc|}
\hline
& $\lambda=0.01$ &$\lambda=0.1$ & LW \\
\hline
Frobenius & 0.1 & 0.25 & 0.2 \\
Spectral Norm & 0.09 & 0.24 & 0.2 \\
KL & 0.83 & 0.85 & 0.76 \\
symKL & 0.85 & 0.81 & 0.68 \\
\hline
\end{tabular}
    \caption{Pearson correlation of different metrics with the worst case NSNR. The true covariance is approximately low rank.}
    \label{tab:reg}
\end{table}

\begin{tabular}{lccc}
\end{tabular}

\begin{figure}
    \centering
    \includegraphics[width=0.85  \linewidth]{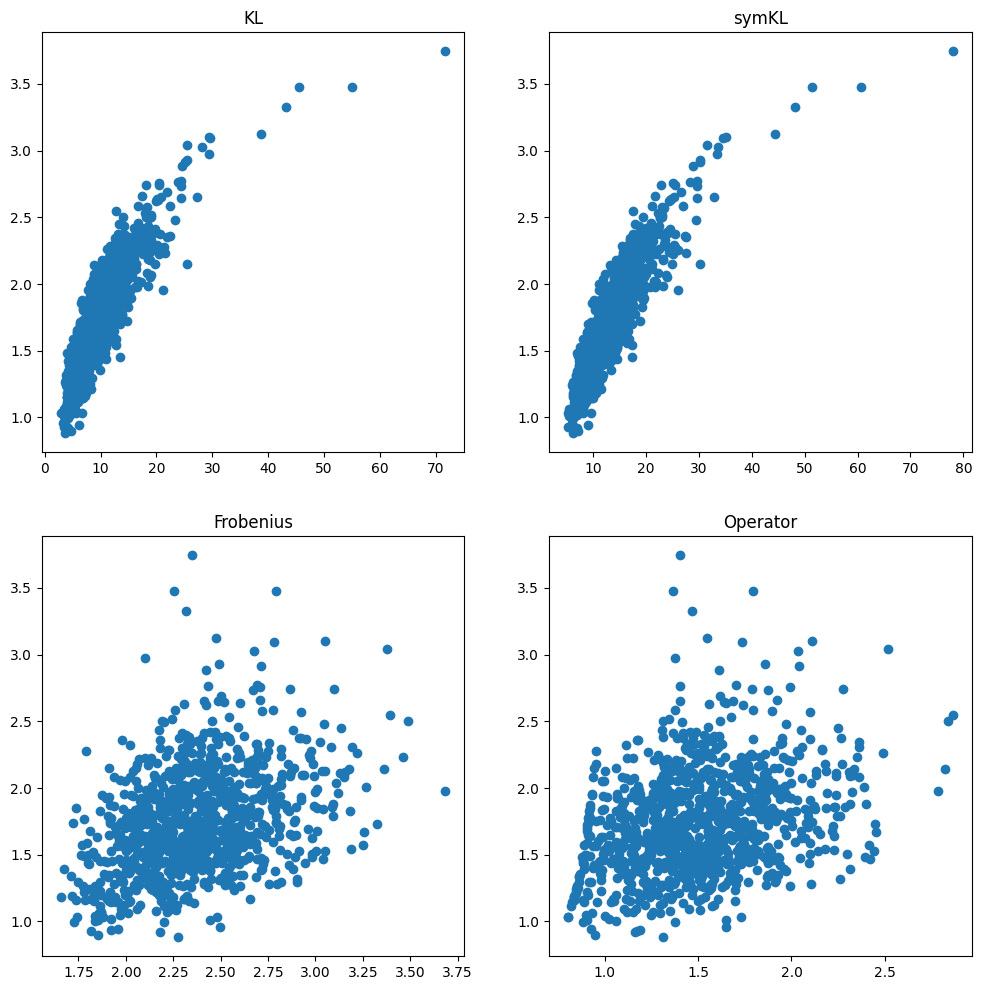}
    \caption{Scatter plots of the metrics where the y-axis are the NSNR values and the x-axis are the four metrics. It is easy to see that the KL metrics are more correlated with NSNR than the norm based metrics.}
    \label{fig:scatter}
\end{figure}

In this section, we consider the relations between NSNR and the competing metrics using numerical experiments. We emphasize that the goal is not to compare state of the art algorithms but understand the metrics better. Therefore, we use very simple settings \footnote{The code for reproducing the experiments can be found in \url{https://github.com/tzvid/nsnr}}. 

We test two ground truth covariance matrices:
\begin{itemize}
    \item Identity with $C_{\rm{identity}}=I$.
    \item Approximately low rank  with $C_{\rm{low\;rank}}=I+A$ where $A$ is all zeros except for $A_{11}=100$.
\end{itemize}
The dimension of the covariance matrix in all experiments is $m=10$.
The estimated covariance is a regularized sample covariance
\begin{align}
    \hat C_{\lambda} = \frac{1}{N}\sum_{i=1}^Nv_iv_i^T+\lambda I
\end{align}
where $\lambda>0$ is a regularization parameter and $v_i$ are $N$ independent and identically distributed realizations of a zero mean multivariate normal with the underlying covariance. We experiment with fixed $\lambda=0.01$, $\lambda=0.1$ as well as an adaptive $\lambda$ choice due to Ledoit and Wolf (LW) \cite{ledoit2004well}. 

We compare five metrics: NSNR, Frobenius norm, spectral norm, KL and as well as the symmetric version of KL denoted by symKL \cite{pereira2024asymptotics}. We measure the relation between each of the metrics with NSNR by running $1000$ independent experiments and reporting the resulting Pearson's correlation.

In the first experiment, we test the correlations under different number of samples. We use a non-regularized sample covariance with $\lambda=0$ and $N=50,\cdots,200$. Table I reports the Pearson correlation using an identity true covariance. The table clearly shows that KL and symKL are significantly more correlated with NSNR than the two other metrics. The KL based metrics are also quite stable as a function of $N$. The norm based metrics are less correlated but improve when $N$ is large. 
As visualization, Fig. 1 provides scatter plots of the metrics where the y-axis are the NSNR values and the x-axis are the four metrics. Here too it is easy to see that the KL based metrics are more correlated with the NSNR than the norms.

The second experiment is identical to the first except that the true covariance is approximately low rank. The results presented in Table III agree with previous results but are even more significant.

The third experiment considers the behavior with respect to different estimates. Again we assume an approximately low rank ground truth and compare the metrics with $N=200$ and different regularizations. The results in Table IV are consistent with the previous conclusions and demonstrate the strong correlation between NSNR and the KL based metrics.

Finally, we consider the use of the metrics for parameter tuning in covariance estimator selection. The true covariances are generated as $C=C_0+\Delta C$ where $C_0$  is low rank and $\Delta C$ is a Wishart random matrix with 20 degrees of freedom. The estimators are defined as \cite{bandiera2010knowledge}:
\begin{equation}
    \hat C_{\lambda} = (1-\lambda)\frac{1}{N}\sum_{i=1}^Nv_iv_i^T+\lambda \EE{C}.
\end{equation}
with $N=50$. The choice of $\lambda$ is done by minimizing the different metrics over $1000$ random examples. Table IV shows the chosen shrinkage coefficients $\lambda^*$ and their corresponding worst case NSNR. It can be seen that KL and symKL lead to near optimal regularization, while the other metrics result in suboptimal values and a degradation in detection performance.

\begin{table}[]
    \centering
\begin{tabular}{|l|cc|}
\hline
& $\lambda^*$ & ${\rm{NSNR}}_{\min}$  \\
\hline
Frobenius & 0.38 & 0.43  \\
Spectral Norm & 0.44 & 0.41  \\
KL & 0.04 & 0.75  \\
symKL & \textbf{0.02} & \textbf{0.76} \\
\hline
NSNR  & 0.02 & 0.76\\
\hline
\end{tabular}
    \caption{The estimated value of the shrinkage coefficient $\lambda$ and the corresponding worse case NSNR, as selected by using different metrics}
    \label{tab:reg}
\end{table}

\section{ACKNOWLEDGMENT} 
The authors would like to thank Daniel Busbib for fruitful discussions and helpful insights. Part of this research was supported by ISF grant 2672/21.

\bibliographystyle{IEEEtran}
\bibliography{main.bib}

\end{document}